\newtheoremstyle{mystyle}
    {}%                      % 上部スペース
    {}%                      % 下部スペース
    {\normalfont}%           % 本文フォント
    {}%                      % インデント量
    {\bf}%                   % 見出しフォント
    {}%                      % 見出し後の句読点, '.'
    { }%                     % 見出し後のスペース, ' ' or \newline
    {}%
\theoremstyle{mystyle}
\newtheorem{theorem}{Theorem}
\newtheorem{lemma}{Lemma}
\newtheorem{corollary}{Corollary}
\newtheorem{definition}{Definition}
\newcommand{\mt}[1]{\mathit{#1}}
\newcommand{\ot}{\ensuremath{\leftarrow}}
\newcommand{\UlargerD}{\ensuremath{\mathrm{``Backward}>\mathrm{Forward''}}}
\newcommand{\UsmallerD}{\ensuremath{\mathrm{``Backward}<\mathrm{Forward''}}}
\newcommand{\Vcur}{\ensuremath{v_{\mathrm{cur}}}}
\newcommand{\Portin}{\ensuremath{\pi_{\mathrm{in}}}}
\newcommand{\Succ}{\ensuremath{\mathrm{fwd}}}
\newcommand{\Pred}{\ensuremath{\mathrm{back}}}
\newcommand{\True}{\ensuremath{\mathrm{True}}}
\newcommand{\False}{\ensuremath{\mathrm{False}}}
\newcommand{\Findfirst}{\textsf{FindFirst}}
\newcommand{\Mark}{\textsf{MarkPred}}
\newcommand{\Movetop}{\textsf{MoveSource}} 
\newcommand{\Modifymove}{\textsf{Modify\&Move}} 
\newcommand{\Moveonehopdown}{\textsf{MoveOneHopForward}}
\newcommand{\Movetarget}{\textsf{MoveTarget}}
\newcommand{\Modifysucc}{\textsf{ModifySuccessor}}
\newcommand{\Modifypred}{\textsf{ModifyPredecessor}}
\newcommand{\target}{\textsf{target}}
\newcommand{\inPath}{\textsf{inPath}}
\newcommand{\direction}{\textsf{direction}}
\newcommand{\Color}{\textsf{color}}
\newcommand{\mrm}[1]{\ensuremath{\mathrm{#1}}}
\newcommand{\msf}[1]{\ensuremath{\mathsf{#1}}}
\newcommand{\Sim}{\ensuremath{\mathit{Sim}}}
\newcommand{\TMSpace}[1]{\textsf{TMSpace}}
\newcommand{\SymLG}{\ensuremath{\mathsf{SymLG}}}
\newcommand{\Pagt}{\ensuremath{\mathsf{P_{AGT}}}}
\newcommand{\Poly}{\ensuremath{\mathrm{Poly}}}
\newcommand{\Nmem}{\ensuremath{\mathsf{st}}}
\newcommand{\Amem}{\ensuremath{\mathsf{mem}}}
\newcommand{\ARmem}{\ensuremath{\mathsf{a}}}
\newcommand{\Trans}{\ensuremath{\mathsf{trans}}}
\newcommand{\Parity}{\ensuremath{\mathsf{Parity}}}
\newcommand{\Del}{\ensuremath{\mathsf{Delete}}}
\newcommand{\Copy}{\ensuremath{\mathsf{Copy}}}
\newtheorem{problem}{Problem}
\newcounter{cntLemmaNumber}
\title{Deciding Graph Property by Single Mobile Agent: \\ One-Bit Memory Suffices\footnote{A part of this paper was presented in the 2018 ACM Symposium on Principles of Distributed Computing (PODC2018) as a brief announcement.}}
\author{
Taisuke Izumi\footnote{Graduate School of Information Science and Technology, Osaka University, E-mail: \{t-izumi, n-kitamura, masuzwa\}@ist.osaka-u.ac.jp.}
\and Kazuki Kakizawa\footnote{Graduate School of Engineering, Nagoya Institute of Technology, E-mail: \{kezawa0612, btk15049\}@gmail.com}
\and Yuya Kawabata\footnotemark[3]
\and Naoki Kitamura\footnotemark[2]
\and Toshimitsu Masuzawa\footnotemark[2] 
}
\date{}
\begin{document}

\maketitle
\begin{abstract}
We investigate the computational power of the deterministic single-agent model where the agent and each node are equipped with a limited amount of persistent memory. Tasks are formalized as decision problems on properties of input graphs, i.e., the task is defined as a subset $\mathcal{T}$ of all possible input graphs, and the agent must decide if the network belongs to $\mathcal{T}$ 
or not. We focus on the class of the decision problems which are solvable in a polynomial number of movements, and polynomial-time local computation. The contribution of this 
paper is the computational power of the very weak system with one-bit agent memory and $O(1)$-bit storage (i.e. node memory) 
is equivalent to the one with $O(n)$-bit agent memory and $O(1)$-bit storage. 
We also show that the one-bit agent memory is crucial to lead this equivalence: There exists a decision task which can be 
solved by the one-bit memory agent but cannot be solved by the zero-bit memory (i.e., oblivious) agent.
Our result is deduced by the algorithm of simulating the $O(n)$-bit memory agent by the one-bit memory agent with polynomial-time overhead, which is developed by two novel technical tools. The first one is a dynamic $s$-$t$ path maintenance mechanism which uses only $O(1)$-bit storage per node. The second one is a new lexicographically-ordered DFS algorithm for the mobile agent system with $O(1)$-bit memory and $O(1)$-bit storage per node. These tools are of independent
interest.
\end{abstract}

%==============================================
%=============================================
\section{Introduction}
%==============================================
%==============================================
\subsection{Background and Our Result}
The \emph{mobile agent} model is one of the popular computational models in distributed computing,
where autonomous entities (i.e., \emph{agents}) moving in the input network process some given 
task. It is recognized as an abstraction of robot systems, web crawlers, swarm intelligence, 
and so on. The models of mobile agents are characterized by several features such as  
the number of agents, capability of agents and nodes, faults, and network topology. It is 
a central question in the theory of mobile agents to clarify the computational power of 
a given model.

In this paper, we investigate the computational power of the deterministic 
single-agent model where the agent and nodes are equipped with a limited amount of 
persistent memory. Throughout this 
paper, we refer to the memory managed by the agent as ``memory'', and that managed 
by each node as ``storage''.  The detailed 
characteristics follow the standard assumptions of mobile agent systems: The input graph $G$ 
is \emph{anonymous} (i.e., there is no unique ID of nodes) and neighbors of a node is 
identified by \emph{local port numbers} assigned to the edges incident 
to the node. The agent does not have any prior knowledge on the network topology. In one time step, 
the agent performs local computation, which includes the update of the information stored in 
its own memory and the storage at the current node, and decides the neighbor to which it 
moves. 

We formalize tasks as decision problems on properties of 
input networks. More precisely, a task is a subset $\mathcal{T}$ of all possible 
inputs $\mathcal{G}$. The agent must decide if the input $G$ belongs to $\mathcal{T}$ or not, i.e., 
the agent terminates with one-bit output indicating $G \in \mathcal{T}$ or not. This decision 
problem is defined in a very general form and covers many popular graph problems.
For example, all the following problems belong to this class: planarity, bipartiteness, and 
boundedness of graph parameters such as diameter and treewidth.  

We focus on the class of the decision problems which are solvable by a single mobile 
agent in a polynomial number of movements and polynomial-time local computation. 
Such a class is a natural analogue of class $P$ in the centralized computation. Formally, we denote by $\Pagt(g(n), h(n))$ the class of the decision problems which are solvable in the mobile 
agent system with $g(n)$-bit memory and $h(n)$-bit storage. The main contribution of 
this paper is to reveal the relationship between function $g(n)$ and the 
task solvability. Surprisingly, we show that the magnitude of $g(n)$ does not matter. 
Even if $g(n) = 1$, the agent can use the asymptotically whole storage as its own memory.
We have the following theorem:
\begin{theorem} \label{thm:main}
$\Pagt(1, \Theta(1)) = \Pagt(\Theta(n), \Theta(1))$\footnote{The asymptotic version of this class $(\Pagt(\Theta(g(n)), \Theta(h(n))$ is defined as the union of $\Pagt(g'(n), h'(n))$ for all $g'(n) = \Theta(g(n))$ 
and $h'(n) = \Theta(h(n))$. }.
\end{theorem}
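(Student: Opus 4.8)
The plan is to prove the nontrivial inclusion $\Pagt(\Theta(n), \Theta(1)) \subseteq \Pagt(1, \Theta(1))$; the reverse inclusion is immediate since a one-bit agent is a special case of an $O(n)$-bit agent. The central object is a \emph{simulation}: given an algorithm $\Aa$ for the $O(n)$-bit-memory, $O(1)$-bit-storage model that runs in polynomially many moves, I will construct an algorithm $\Aa'$ for the $1$-bit-memory, $O(1)$-bit-storage model that decides the same language with only a polynomial blow-up in the number of moves and polynomial local computation. The key realization is that the $O(n)$ bits of agent state must be stored \emph{distributedly} in the $O(1)$-bit node storages — $O(n)$ nodes times $O(1)$ bits each gives the needed $O(n)$ total bits — but the one-bit agent, having almost no memory of its own, cannot remember \emph{where} in the graph it has parked these bits. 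So the simulation must re-locate the distributed storage deterministically each time it needs to read or update a bit of $\Aa$'s state.

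The implementation I expect the paper to use rests on the two tools advertised in the abstract. First, to give the wandering agent a canonical ``address space'' over the anonymous graph, I would fix a spanning structure and traverse it in a fixed order: a \emph{lexicographically-ordered DFS} that can be run by an $O(1)$-bit agent using only $O(1)$-bit node storage (the second tool). This yields a canonical linear order on (a subset of) the nodes, so the $i$-th bit of $\Aa$'s simulated memory lives at the $i$-th node in DFS order; the agent can always find bit $i$ by restarting the DFS from the source and counting. Second, because the agent must shuttle between a ``work site'' (the node of $G$ that $\Aa$ currently occupies) and the ``memory tape'' (the DFS-ordered nodes), and must do so while keeping $O(1)$ bits of its own state, I would use the \emph{dynamic $s$-$t$ path maintenance mechanism} (the first tool) to keep a navigable path between the current simulated-agent position and the source of the address space, updating it in $O(1)$-bit storage as $\Aa$ moves. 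Each elementary step of $\Aa$ — read current node's $O(1)$-bit storage, read/update a constant number of bits of agent memory, decide the outgoing port — is then simulated by: walking the maintained path to the memory region, performing the relevant DFS-indexed reads/writes, walking back, and emulating $\Aa$'s transition. Since $\Aa$ makes $\Poly(n)$ moves and each is simulated with $\Poly(n)$ overhead, $\Aa'$ makes $\Poly(n)$ moves; local computation stays polynomial. Termination and the one-bit output of $\Aa$ are copied verbatim.

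The main obstacle — and where I expect the bulk of the technical work to go — is maintaining \emph{global consistency of the distributed state with only $O(1)$ bits per node and $1$ bit of agent memory simultaneously}. Three subtleties stack up. (i) The DFS that defines the address space itself needs node storage (colors/visited marks, parent-port bookkeeping); the lexicographic-DFS tool must be designed so these bookkeeping bits are $O(1)$ per node and, crucially, can coexist with (or be cleanly time-multiplexed against) the bits holding $\Aa$'s simulated memory and the bits holding the $s$-$t$ path. Reconciling all three ``uses'' of the same $O(1)$-bit cells — address-space traversal, memory tape, and path maintenance — under a single $O(1)$ budget is the delicate accounting. (ii) The agent has essentially no private memory, so every multi-phase operation (e.g., ``find bit $i$'': restart DFS, count to $i$, remember $i$) must be made \emph{stateless from the agent's side} — the counter $i$, the phase indicator, and any scratch must themselves be encoded in node storage along the path, which means the path-maintenance tool must double as a read/write register strip. (iii) One must verify that the interleaving never corrupts data: while re-running the DFS to find the memory tape, the agent must not overwrite the path bits it will need to get home, and while walking the path it must not disturb DFS marks it will re-use. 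I would resolve this by a strict protocol discipline — a fixed cyclic schedule of phases, each phase touching a disjoint sub-field of the $O(1)$-bit storage — and prove an invariant, maintained across every elementary simulated step, stating exactly which sub-fields are live and what they encode. Establishing that invariant, and checking it is preserved by each of the constant-size building-block operations, is the crux of the proof.
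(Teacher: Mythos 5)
Your high-level strategy for the nontrivial inclusion matches the paper's: distribute the simulated agent's $\Theta(n)$-bit memory over the $O(1)$-bit node storages, use a small-space lex-DFS to impose a canonical serialization of the anonymous nodes (the ``tape''), and use the dynamic path structure (the paper's R-path) to shuttle between the simulated agent's location and the tape. However, there is a concrete gap in how you get all the way down to \emph{one} bit. The paper deliberately splits the proof into two stages: first it shows $\Pagt(\Theta(1),\Theta(1)) = \Pagt(\Theta(n),\Theta(1))$, where the simulator is allowed $O(1)$ bits of agent memory — enough to hold the finite control state of the simulated Turing machine $M(A)$ and the simulator's own phase/program-counter information — and only then, in a separate and essentially generic step, it shows $\Pagt(1,\Theta(1)) = \Pagt(\Theta(1),\Theta(1))$ by transferring the $c$-bit simulated agent memory across each move one bit at a time (a field $\Amem$ at the departure and destination nodes plus a small counter $\Trans$ in node storage drive $c$ round trips of the one-bit agent). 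Your proposal instead tries to run the entire heavy simulation directly with a one-bit agent, and you correctly identify that this forces every phase indicator, counter, and piece of scratch state into node storage — but you only gesture at a ``strict protocol discipline'' and an invariant without giving the mechanism. That is precisely the difficulty the paper's two-stage decomposition is designed to avoid, and without it (or an actual construction replacing it) the argument is incomplete.

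A second, more specific problem is your addressing scheme: ``the agent can always find bit $i$ by restarting the DFS from the source and counting.'' An agent with $O(1)$ bits (let alone one bit) cannot hold the index $i$, and even pushing the counter into node storage is not something you construct. The paper never addresses tape cells by index at all: the five tape-head positions and the simulated agent's location are each maintained as the \emph{target} of a separate R-path instance rooted at $v_0$, and heads move by one cell via \Modifysucc{} and \Modifypred{} (implemented by re-running \textsf{DLDFS} with an auxiliary R-path instance and the \Copy{} operation), while reads and writes are done by walking the maintained paths with \Movetop{} and \Movetarget. So the pointer-based, incrementally-updated head maintenance — not DFS-plus-counting — is the ingredient that makes the constant-size storage budget work; your sketch would need to be reworked around something like it, together with the two-stage reduction, to become a proof.
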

It is easy to show that our result is tight with respect to memory size. That is, the computational powers of the one-bit memory agent and the zero-bit memory agent (so-called \emph{oblivious} agent) are not equivalent under the assumption of $O(1)$-bit
storage. The following theorem is obtained as a corollary of the impossibility result by Cohen et al.~\cite{CFIKP08}.
\begin{theorem} \label{thm:impossible}
There exists a decision problem $\mathcal{P}$ such that 
$\mathcal{P} \not\in \Pagt(0, \Theta(1))$ and $\mathcal{P} \in \Pagt(1, \Theta(1))$ hold. That is, $\Pagt(0, \Theta(1)) 
\subsetneq \Pagt(1, \Theta(1))$.
\end{theorem}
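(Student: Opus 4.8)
The plan is to derive the separation straight from the impossibility result of Cohen et al.~\cite{CFIKP08} for oblivious agents, and then to check the two membership claims. First I would recast that lower bound in the form convenient here: an oblivious agent is completely described by a deterministic transition function whose only arguments are the incoming port number and the content of the storage at the current vertex, so two inputs that present identical observable sequences along the (forced) trajectory of such an agent --- and that stay indistinguishable no matter which of the $O(1)$ available colours the agent writes into the storage it passes through --- must receive the same one-bit output, or else make the agent run forever without terminating; in either case the agent cannot be correct on both. The construction of~\cite{CFIKP08} yields, for every bound $c$ on the number of storage colours, such a pair of graphs together with a designated starting vertex. Fixing a graph property $\mathcal{P}$ that separates these pairs (i.e., it contains one graph of each pair and excludes the other) then gives $\mathcal{P}\notin\Pagt(0,\Theta(1))$, since no oblivious agent with $O(1)$-bit storage can answer correctly on both members of its pair within any finite (let alone polynomial) number of moves.

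Second, to get $\mathcal{P}\in\Pagt(1,\Theta(1))$ it suffices, by Theorem~\ref{thm:main}, to show $\mathcal{P}\in\Pagt(\Theta(n),\Theta(1))$. An agent with $\Theta(n)$-bit memory can run an ordinary depth-first search: it keeps the current root-to-vertex path of port numbers in its own memory and uses the $\Theta(1)$-bit storage at each vertex only to mark it as visited, so it explores the whole graph in a polynomial number of moves and then evaluates $\mathcal{P}$ locally, provided $\mathcal{P}$ is --- as it may be chosen to be --- a polynomial-time-decidable property of the explored structure. (Alternatively, one can give a direct one-bit algorithm for $\mathcal{P}$ using the lexicographically-ordered DFS developed later in this paper, which avoids invoking Theorem~\ref{thm:main}.)

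I expect the main point of care, rather than any new idea, to be model matching. One has to verify that the impossibility of~\cite{CFIKP08}, originally phrased for their exploration/traversal setting, survives translation to the exact conventions used here --- a single agent started at a distinguished vertex, $\Theta(1)$-bit storage initialised to a fixed value, the requirement that the agent terminate, and a one-bit \emph{decision} output rather than ``visit every vertex'' --- and one has to choose the separating property $\mathcal{P}$ so that it is simultaneously consistent with the indistinguishable pairs produced by the construction and decidable by full exploration in polynomial time, so that the positive side goes through unchanged. This is the only delicate step; the rest is the bookkeeping sketched above.
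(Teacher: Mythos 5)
Your overall plan (Cohen et al.\ for the negative side, the strong-memory class plus Theorem~\ref{thm:main} for the positive side) is the same skeleton as the paper's proof, but the step you defer as ``model matching'' is where the actual content lies, and as written it has a quantifier problem. The impossibility of \cite{CFIKP08} says: for \emph{each} oblivious agent with small storage there is \emph{some} graph it fails to explore; it does not hand you agent-independent pairs of indistinguishable graphs. The pair you would extract depends on the agent under consideration, so ``fixing a graph property $\mathcal{P}$ that separates these pairs'' requires choosing one set $\mathcal{P}$ that simultaneously splits the (infinitely many, agent-dependent) pairs consistently --- and such a global choice need not exist in general (three pairs on three graphs already force a parity obstruction), nor would an abstractly chosen $\mathcal{P}$ come with any guarantee of being decidable at all, let alone by the agent on the positive side. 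The paper inverts the quantifiers: it fixes the concrete property $\Parity$ (graphs with an odd number of vertices) \emph{first}, and then, for an arbitrary zero-bit-memory algorithm $A$ allegedly deciding it, uses Cohen et al.\ only to get a graph $G$ that $A$ decides without visiting some node $u$; attaching a pendant vertex to $u$ yields $G'$ on which $A$ behaves identically but whose parity differs, a contradiction. That pendant-vertex construction is the missing idea in your write-up --- you cannot simply cite it away.

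The positive side also leans on the unspecified $\mathcal{P}$. ``Explore, then evaluate $\mathcal{P}$ locally'' does not in general fit in the class $\Pagt(\Theta(n),\Theta(1))$: keeping the root-to-current-vertex sequence of port numbers, or reconstructing the explored graph to evaluate an arbitrary polynomial-time property, costs $\Theta(n\log n)$ or more bits of agent memory, not $\Theta(n)$. This is another reason the concrete choice of property matters: for $\Parity$ the agent only needs a mod-$2$ counter on top of a DFS (the paper runs \textsf{DLDFS} and counts first visits), which trivially fits, and then Theorem~\ref{thm:main} (or the $O(1)$-memory DFS directly) gives $\Parity\in\Pagt(1,\Theta(1))$. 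So the fix is simple and is exactly what the paper does: commit to $\Parity$ up front, prove non-membership by the pendant-vertex indistinguishability argument for every zero-bit agent, and prove membership by counting during the DFS rather than by generic post-exploration evaluation.
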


\subsection{Technical Idea}

\sloppy{
The proof of Theorem~\ref{thm:main} consists of two parts, each of which shows
$\Pagt(O(1), O(1)) = \Pagt(O(n), O(1))$ and $\Pagt(1, O(1)) = \Pagt(O(1), O(1))$. 
The second part of reducing the agent memory size from $O(1)$ to $1$ is 
realized by a simple idea, which disassembles the information transfer caused by 
the movement of $k$-bit memory agent into $k$ iterations of round-trip movement of 
the one-bit agent. Hence the main technical challenge is at the first part. The key ingredient of the first part is an algorithm for serializing all the nodes in the network. The serialized nodes, each of which has $O(1)$-bit storage, 
can be treated as the $O(n)$-bit tape $T$ of a Turing machine(TM), and our simulation algorithm utilizes it as the agent 
memory. Letting $A$ be any algorithm utilizing $O(n)$-bit agent memory, 
we model $A$ (i.e., the algorithm of the simulated agent) as a Turing machine $M$ utilizing $O(n)$-bit tape, and the simulation algorithm executes $M$ on tape $T$. The primary 
hurdle of this approach is that we cannot explicitly store the serialized order of 
the nodes. More generally, $O(1)$-bit storage per node does not suffice to store any 
\emph{path structure} assigning each node with a predecessor neighbor and a successor 
neighbor explicitly. The main technical tool of our solution is to provide a new distributed data structure \emph{R-path}, which dynamically maintains a traversable path 
between two given nodes by spending only $O(1)$-bit storage per node and 
$O(1)$-bit memory. Utilizing this technique, our simulation algorithm implements the lexicographically-ordered depth-first search (Lex-DFS) for the input network $G$. It 
obviously provides an implicit serialization of all nodes following the pre-order of Lex-DFS. 
When simulating $M$ using $O(n)$-bit tape, the simulator agent always keeps the 
state of $M$ in its memory. In addition, a constant number of R-path instances are used to connect the locations of the simulator agent, simulated agent, and tape head. 
Then the agent can freely access the information stored in those locations by traversing the maintained paths, and can simulate one-step execution of $M$. 
}

Finally, we emphasize that our Lex-DFS algorithm itself is also a by-product contribution of this paper. To the best of our knowledge, it is the first DFS algorithm only spending $O(1)$-bit memory and $O(1)$-bit storage per node. Since Lex-DFS has a wide range of applications, we believe that this tool will be useful for
importing such applications into the mobile agent systems with limited memory/storage capability.  

\subsection{Related Work}
To the best of our knowledge, there exists no prior work on addressing the 
space complexity of general tasks in the single-agent model. Nevertheless, the space 
complexity of the single-agent model for a specific problem is much investigated.
The mainstream of this research line is the \emph{graph exploration problem}, which 
has a long history and the origin goes back to the Shannon's experiment on maze-solving mouse~\citep{clau}. In the field of theoretical computer science, it received much attention 
after the seminal paper by Aleliunas et al.~\citep{romas} providing 
the explicit polynomial-time upper bound of the random walks in undirected graphs
(so-called \emph{cover time}). It implies that a single $O(\log n)$-bit counter yields 
a simple randomized graph exploration algorithm with termination. This result triggers the interest of log-space \emph{deterministic} graph exploration algorithms in relation to the 
computational complexity issue of log-space computability, and a number of algorithms are developed~\citep{mich,shlo,michal2003,sorin1988,howard1988}. The Reingold's seminal logspace undirected connectivity algorithm is one of the milestones on this line~\citep{gold2008}. 
The memory-size optimality of the Reingold's algorithm is also shown by Fraigniaud et al.~\citep{fraign2005}. The speed matter of graph exploration is considered with 
respect to time-space tradeoff~\citep{uriel1997,adrian2013}.

While all the results above assumes the system without storage, sublogarithmic-memory 
graph exploration algorithms assuming storage has been proposed, for a variety of storage models. 
For example, the \emph{rotor-router} model~\citep{bamp}, which is also known as a technique of de-randomizing random walks, guides the agent by the pointer managed by each node. This model attains the graph exploration with zero-bit memory and $O(\log \Delta)$-bit storage, where $\Delta$ is the maximum degree of the network. Yet another example is the 
the pebble-based algorithm proposed by Disser et al.~\citep{disser}, which is a graph exploration algorithm using $O(\log\log n)$ distinct pebbles. The $k$-pebble model 
is one of the restricted versions of the standard storage model, where each 
agent has $k$ distinguished pebbles, and can put and pick-up each pebble on 
any node for leaving some information there. The conceptual idea of this algorithm
is very close to our result. The algorithm simulates a log-space Turing machine
in a distributed manner using $O(\log\log n)$ pebbles, and runs the Reingold's algorithm
on the simulated Turing machine. Since this model is incomparable to our $O(1)$-bit storage model, the technical contribution
of our result is definitely independent of this algorithm. In 
addition, the simulation overhead of the algorithm by Disser et al. is unfortunately super-polynomial, while our result only
incurs the polynomial-time overhead. The polynomial-time graph exploration using $O(1)$-bit memory and $O(1)$-bit storage 
has been shown to be solvable by Cohen et al.\cite{CFIKP08}. The exploration algorithm
presented in that paper is actually the breadth-first search (BFS) by the agent.
Our DFS algorithm complements another side of very standard graph search algorithms. 
Note that it is certainly possible to use the exploration algorithm by Cohen et al. as the building block of our 
simulation algorithm for node serialization, instead of our lex-DFS algorithm. However, no general simulation technique 
as our result is addressed there. The paper by Cohen et al.\cite{CFIKP08} also presents the impossibility
of graph exploration by zero-bit memory agents if the storage size is $o(\log \Delta)$ bits, which is the basis of 
the proof of Theorem~\ref{thm:impossible}.
%This result is incomparable to Theorem~\ref{thm:impossible} because it presents a 
%stronger lower bound holding for a more general class of problems. 
A variety of 
the exploration algorithms attaining different memory/storage complexities are also 
presented in \cite{SBNOKM15}, but no algorithm there attains the complexity of 
$O(1)$-bit memory and $O(1)$-bit storage.

\subsection{Roadmap}
The paper is organized as follows: In Section~\ref{sec:prelim}, we introduce the formal 
model and the problem definition.
Section~\ref{sec:main} explains the Lex-DFS algorithm using $O(1)$-bit memory and $O(1)$-bit storage, as well as
the details of the R-path structure.
Following the proof of $\Pagt(\Theta(1),\Theta(1)) = \Pagt(\Theta(1),\Theta(n))$ in Section~\ref{sec:sim-constant-bit}, 
we present the result of $\Pagt(1,\Theta(1)) = \Pagt(\Theta(1),\Theta(1))$ in Section~\ref{sec:sim-one-bit}. The proof of Theorem~\ref{thm:impossible} 
is straightforward, and thus it is deferred to the appendix.

%==============================================
%==============================================
\section{Preliminaries}\label{sec:prelim}
%==============================================

\subsection{Input Graph}
We denote by $[a,b]$ the set of integers at least $a$ and at most $b$. The input graph 
$G = (V, E, \Pi)$ is a simple undirected connected graph with port numbering, where 
$V$ is the set of nodes, $E \subseteq V \times V$ is the set of edges, $\Pi$ is the set of 
port-numbering functions (explained later). We denote 
by $n$ and $m$ the numbers of nodes and edges respectively. Let $E(v)$ be 
the set of edges incident to node $v$, $N(v)$ be the set of nodes 
that are adjacent to $v$, and $\Delta(v)$ be the degree of node $v \in V$. While
each element $v$ in $V$ is treated as the unique ID of the corresponding node, it is not 
identified by the agent. That is, nodes are \emph{anonymous}. For the agent to 
distinguish adjacent nodes, a \emph{local port numbering} is defined for each node. An edge 
incident to a vertex $v$ is assigned with a unique label in $[0, \Delta(v) - 1]$ defined by a port numbering function $\pi_v\colon E(v) \to [0, \Delta(v) - 1]$. The set $\Pi$ consists of 
the port-numbering functions for all nodes in $V$. Note that two port numbers $\pi_v(e)$ and 
$\pi_u(e)$ assigned to the same edge $e = (u, v)$ are independent. Since $E(v)$ and $N(v)$ has one-to-one 
correspondence, we often abuse $\pi_v$ as a function from $N(v)$. That is, for any $u \in N(v)$, $\pi_v(u)$ represents the $v$'s port number of edge $(v,u)$. Furthermore, we also denote the inverse function of $\pi_v(u)$ on $N(v)$ by $\pi^{-1}_v$ (i.e., $\pi^{-1}_v \colon [0, \Delta(v) - 1] \to N(v)$). 

\subsection{Mobile Agent}
Each node $v \in V$ has \emph{storage}, which is a persistent memory keeping 
information even after the agent leaves (i.e., the agent can refer to that information when 
it comes back). We denote by $b(v)$ the information stored in the storage at $v \in V$. Since we consider the system with
 $O(1)$-bit storage per node, $b(v)$ is treated as a fixed-length bit string. At the beginning of 
the execution, $b(v)$ for all $v \in V$ are initialized with the sequence of zeros.
The execution of the agent follows discrete time steps. In each step, 
the agent first observes the degree of the current vertex $v$, the port number of the edge from which it entered $v$, and the storage of $v$. Then it 
performs local computation following the algorithm and the observed information. The local computation outputs the information left to the storage of the current node and the port number $x \in [0, \Delta(v) - 1]$ indicating 
the destination of the next movement. Finally, the step finishes with the movement to $\pi^{-1}_v(x)$. It is 
guaranteed that the movement at a step finishes by the beginning of 
the next step. The location of the agent at the current step is 
referred to as {\Vcur}, and the port number from which the agent comes to {\Vcur} is referred to as {\Portin}. Note that the ``current step'' implied by notations {\Vcur} and {\Portin} depends on the context.

An algorithm $A$ is formally defined by a 5-tape Turing machine $M(A) = (Q, \Sigma, \gamma, q_{0}, q_{1})$, where $Q$ is the set of states, $\Sigma = \{0, 1, \sqcup \}$ is the tape alphabet, $\gamma \colon Q \times \{0,1\}^5 \to Q \times \{0, 1\}^5  \times \{L,R\}^5$ is the transition function, 
$q_{0}$ is the initial state, and $q_{1}$ is the termination state. In what follows, 
we refer to an atomic step of TMs as ``TM step'', and refer to the time step of agent 
executions as ``agent step'' or simply ''step'' for distinguishing them clearly. That is, 
one agent step consists of multiple TM steps of $M(A)$ and the following agent movement after
the termination of $M(A)$. The first tape corresponds to the local persistent memory of 
the agent, which is not reset even after the TM terminates and thus accessible in the next agent step. On the other hand, at the beginning of every agent step, the 2-4th tapes are initialized with the information of $b(\Vcur)$, the degree $\Delta(\Vcur)$, and $\Portin$ respectively. The 2-4th tapes are read-only in executions of $M(A)$. The fifth tape is write-only, which initially stores only symbol $\sqcup$ and finally stores the port number from which the agent leaves $\Vcur$ at the termination of $M(A)$. Due to some technical reason, we assume that the information stored in the 3-5th tapes are unary encoded, i.e., a numerical value $k$ is encoded by $k$ iterations of 1. Note that this assumption is not essential because unary and binary encodings are mutually transformable in polynomial TM steps. The agent performs the local computation following the transition function $\gamma$ until $M(A)$ terminates at state $q_1$. The algorithm $A$ terminates when the local computation finishes
with writing nothing in the fifth tape (i.e., storing only $\sqcup$). The maximum lengths 
$g(n)$ and $h(n)$ of the first and second tapes 
used by $M(A)$ over all possible executions for graphs of at most $n$ nodes are respectively called 
the \emph{memory complexity} and \emph{storage complexity} of $A$.  
We call $A$ a \emph{polynomial-time} algorithm if for any input graph $G$ of $n$ nodes, 
$A$ necessarily terminates within $\Poly(n)$ agent steps and local computation by $M(A)$ 
necessarily terminates within $\Poly(n)$ TM steps.

\subsection{Decision Task and Complexity Class}
Let $\mathcal{G}$ be the set of all possible input graphs. A \emph{decision task} $\mathcal{T}$ is defined as 
a subset of $\mathcal{G}$ consisting of all ``yes'' instances of that task. Assume that the storage at each node
contains a boolean output register. An algorithm $A$ solves a decision task $\mathcal{T}$
if for any input graph $G \in \mathcal{G}$ and any initial location of the agent, the agent 
terminates with writing the value of the predicate $G \in \mathcal{T}$ 
in the output register at the final location. We define $\Pagt(g(n), h(n))$ be the class of the decision tasks 
solvable by a polynomial-time algorithm of memory complexity $g(n)$ and storage complexity 
$h(n)$.

\section{Lex-DFS Algorithm Using $O(1)$-bit memory and $O(1)$-bit storage} \label{sec:main}

\subsection{Lexicographically-Ordered DFS}
The lexicographically-ordered depth-first search (Lex-DFS) is a special case of the standard depth-first search, which requires some specific search order following the port numbers: 
When the agent chooses an unvisited neighbor in $N(\Vcur)$, 
it must choose the node $u \in N(\Vcur)$ whose port number $\pi_{\Vcur}(u)$ is the minimum of all unvisited neighbors. Throughout this paper, we fix the root of the 
Lex-DFS search by the initial location of the agent, which is denoted by $r$. We also 
define $r = v_0, v_1, \dots, v_{n-1}$ as the sequence of all vertices in the input graph $G$ following the pre-order of 
Lex-DFS. Fixing the root $r$, the Lex-DFS ordering and the corresponding DFS tree is uniquely determined. We denote 
by $P(v)$ the path from $r$ to $v$ in the Lex-DFS tree.

\subsection{Procedure \textsf{FindFirst}} 
For ease of presentation, we introduce a fundamental subroutine called \textsf{FindFirst}. 
In the design of our algorithm, the agent often has to discover the neighbor with a specific state. 
The procedure \Findfirst$(O,P)$ invoked at a node $u$ visits all the neighbors $N(u)$ in the order specified by $O$
for probing. When the agent finds the neighbor $v \in N(u)$ 
satisfying the predicate $P$ on the storage contents for the first time, it goes back to $u$ and terminates.
Then $\Portin$ indicates the port number of $v$, i.e., it terminates with $\Portin = \pi_u(v)$. 
To simplify the description of the algorithm , if there are no neighbor $v$ satisfying 
$P(v)=\True$, \Findfirst$(O, P)$ terminates with $\Portin = -1$~\footnote{Since the 
system does not have port number $-1$, it is actually implemented by preparing one-bit flag
indicating the success or failure of the exploration in the agent state.}. 
We can choose the search order $O$ from the following four options: HeadAscend,
MiddleAscend, MiddleDescend, TailDescend.
The choice of Head, Middle, and Tail determines the
port number where the procedure starts the search, which respectively means
$0$, $\Portin$, $\Delta(u) - 1$. The choice of Ascend/Descend is 
the order of the search, each of which corresponds to the ascending and descending 
orders of port numbers. The search with option Middle is not cyclic, i.e.,
it terminates when the neighbor with port number $0$ or $\Delta(u) -1$ has been probed.
Note that iterative probing of neighbors does not need any persistent counter: In the
invocation at node $v$, the agent repeats the go-and-back for each neighbor in $N(v)$,
which is implemented by the mechanism that the agent coming back to $v$ moves to the
neighbor through port $\Portin + 1$ or $\Portin - 1$ according to the choice of Ascend/Descend.

\subsection{Lex-DFS Algorithm Using $O(1)$-Bit Memory and $O(1)$-Bit Storage} \label{subsec:lexdfs}
This section provides the details of our small-space Lex-DFS algorithm, called \textsf{DLDFS}. 
In the following argument, we use notation $b(v).\mathsf{q}$ to represent the variable $\mathsf{q}$ 
stored in the storage $b(v)$. The fundamental structure of 
\textsf{DLDFS} follows the centralized Lex-DFS algorithm using only $O(n)$-bit memory space 
(more precisely, it manages vertex labels of $O(1)$ bits) proposed concurrently by 
Asano et al.~\citep{asano} and Elmasry et al.~\cite{elma}. Initially, all vertices are 
colored by white (implying ''not visited yet'').  All visited nodes are colored by black (implying ``already visited'') or grey (implying ``already visited, and on the path 
$P(u)$ for the location $u$ of the DFS search head''). We refer to $P(u)$ as the \emph{grey path}. In \textsf{DLDFS}, the color
of each node is managed by the storage variable $\mathsf{color}$.
In the forwarding phase, \textsf{DLDFS} behaves completely same as the naive Lex-DFS algorithm. 
First, it checks if there exists a white neighbor of $\Vcur$ or not.
If it exists, the algorithm chooses the one with the minimum port number as the next node,
and colors the next node by grey. This process is easily implemented by the procedure $\textsf{FindFirst}$. 
When no white neighbor exists, the algorithm performs the backtrack to the parent node in the grey path, 
as well as coloring the current node by black. Since storing the pointers 
to the parent node at each node on $P(u)$ incurs $O(\log n)$-bit storage cost, the technical challenge of 
$O(1)$-bit storage Lex-DFS lies at how to perform backtracks without those pointers. 
The solution by Asano et al. and Elmasry et al. is to resolve this mater by the following lemma, which states 
that the grey path from $r$ is recovered only from the information of the \emph{set} of 
grey nodes.

\begin{lemma}[Asano et al.~\citep{asano}, Elmasry et al.~\citep{elma}] \label{lma:stackcorrectness}
Let $G = (V, E, \Pi)$ and $r \in V$ be any instance of Lex-DFS, and $ P(u) = u_0, u_1, 
\cdots, u_k$ ($r = u_0$ and $u = u_k$). 
Then, for any $i \in [0, k-1]$, $\pi_{u_i}(u_{i+1}) = 
\min \{ \pi_{u_i}(u_j) \mid j \in [i + 1, k], u_j \in N(u_i) \}$ holds.
\end{lemma}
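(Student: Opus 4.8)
The plan is to look at the single moment in the Lex-DFS execution at which each tree edge of the grey path is traversed for the first time, and to combine the greedy port-selection rule of Lex-DFS with the nesting structure of DFS visit times.

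First I would fix $i \in [0, k-1]$. Since $(u_i, u_{i+1})$ is a tree edge of the Lex-DFS tree, there is an agent step $t$ at which the search head is located at $u_i$, the node $u_{i+1}$ is still white, and the head moves along port $\pi_{u_i}(u_{i+1})$ to $u_{i+1}$, visiting it for the first time (coloring it grey). By the defining rule of Lex-DFS, at step $t$ the head selects the white neighbor of $u_i$ with the smallest port number; consequently every neighbor $w \in N(u_i)$ with $\pi_{u_i}(w) < \pi_{u_i}(u_{i+1})$ is already non-white (grey or black) at step $t$.

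Second I would argue that every $u_j$ with $j \in [i+2, k]$ that also belongs to $N(u_i)$ is still white at step $t$. This is the place where the structure of $P(u)$ is used: because $P(u) = u_0 u_1 \cdots u_k$ is a path in the Lex-DFS tree, $u_{i+1}$ is a proper ancestor of $u_j$, hence $u_j$ lies in the subtree rooted at $u_{i+1}$ and is therefore first visited only after $u_{i+1}$ is first visited, i.e., strictly after step $t$. Combining the two observations, no $u_j$ with $j \in [i+2, k]$ and $u_j \in N(u_i)$ can satisfy $\pi_{u_i}(u_j) < \pi_{u_i}(u_{i+1})$, while $u_{i+1}$ itself is a member of the set $\{ u_j \mid j \in [i+1,k],\ u_j \in N(u_i)\}$ over which the minimum is taken (so that set is nonempty). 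Hence the minimum equals $\pi_{u_i}(u_{i+1})$, which is exactly the claim.

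The only step that needs a little care is the implicit use of the DFS nesting (parenthesis) property: a node together with its entire subtree is explored during a contiguous time interval, so an ancestor is always first visited before any of its descendants. I would state this as a standard structural fact about Lex-DFS and, if a fully self-contained argument is desired, prove it by an easy induction on the recursion depth (equivalently, on the length of the grey path over the course of the execution). Everything else follows directly from the port-minimality rule defining Lex-DFS, so I expect no further obstacle.
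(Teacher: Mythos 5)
Your argument is correct. Note that the paper does not prove Lemma~\ref{lma:stackcorrectness} at all: it is imported as a known fact from Asano et al.~\citep{asano} and Elmasry et al.~\citep{elma}, so there is no in-paper proof to compare against. Your proof is the standard one and is sound: at the step where the tree edge $(u_i,u_{i+1})$ is traversed, the Lex-DFS rule forces every neighbor of $u_i$ with a port number smaller than $\pi_{u_i}(u_{i+1})$ to be already visited, while every $u_j$ with $j\in[i+2,k]$ lying in $N(u_i)$ is a descendant of $u_{i+1}$ and hence, by the DFS nesting property, still white at that step; since $u_{i+1}$ itself belongs to the set over which the minimum is taken, the claimed equality follows. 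The only ingredient you flag as assumed, namely that a descendant is first visited strictly after its ancestor, is indeed a standard structural fact of (Lex-)DFS and the easy induction you sketch suffices, so your write-up would serve as a legitimate self-contained proof of the cited lemma.
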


The lemma above deduces the following backtrack process at node $u$: First, we color $u$ by an extra color (we use green
for this purpose). 
Then the agent starts the traversal of $P(u)$ from the root node $r$. 
Consider the situation where $u_0, u_1, \cdots, u_{i}$ are already 
traversed and the algorithm at the current node $u_i$ is finding the next node $u_{i+1}$. Assume that each node maintains 
the boolean flag $\mathsf{traversal}$ in its storage, and all traversed nodes set {\True} to that flag. Then the algorithm can identify $u_{i+1}$ because it is the non-flagged grey neighbor of $u_i$ having the minimum port number, by Lemma~\ref{lma:stackcorrectness}. 
This search task is implemented by {\Findfirst}. When $u_{i+1}$ is found, $\Portin$ indicates the port to $u_{i+1}$. 
The agent visits $u_{i+1}$ and sets $b(u_{i+1}).\mathsf{traversal} = \True$. Iterating this process the algorithm 
completely traverses the sequence $P(u)$ and can identify the parent of $u$ (when {\Findfirst} finds a green neighbor,
the current node is the parent). After traversing $P(u)$, all flags must be reset,
which is done by executing the traversal from $r$ again by inverting the roles of $\mathsf{traversal}$
(i.e., $\mathsf{traversal} = \True$ represents that it is not flagged). Finally, the backtrack finishes with coloring $u$ by black. 

The remaining issue is how the agent goes back 
to the root $r$ to start the traversal of $P(u)$. To address it, we introduce 
a new distributed date structure called \emph{R-path}. An instance $X$ of R-path 
maintains a path connecting an arbitrary  \emph{source node} $s$ to 
a \emph{target node} $t$. Note that R-path structure does not store any specific 
path from $s$ to $t$, but some path connecting from $s$ to $t$ admitting the 
traversal by the agent. In that sense, what the R-path structure provides is only the functionality of jumping between $s$ and $t$. The important features of 
R-path are twofold: We can dynamically update the location of the target node $t$, 
and $O(1)$-bit memory and storage suffice to implement it. Each instance $X$ of the R-path structure provides the following three operations to the upper application layer. Note that every operation can be executed only when the agent is on the managed 
path, and {\Modifymove} can be executed only when the agent is on the target node $t$.
\begin{itemize}
\item \Movetop: The agent on the path managed by $X$ goes back to the source node of $X$. 
\item \Modifymove: The agent changes the target node of $X$ to $\pi^{-1}_{\Vcur}(\Portin)$, and moves there.
\item \Moveonehopdown: When the agent is on the path managed by $X$, it moves 
to the neighbor in that path closer to the target node .
\end{itemize}
Apparently, the R-path structure is the last piece of completing \textsf{DLDFS}. In running the Lex-DFS algorithm explained above, the agent manages the path from the root $r$ to the tail of the current grey path using the R-path structure. When it must go back to $r$ in the process of backtracks, it suffices to run {\Movetop}. Assuming the R-path structure is realizable, 
we obtain the following theorem.

\begin{theorem} \label{thm:mainDFS}
There exists a polynomial-time Lex-DFS algorithm using $O(1)$-bit memory and $O(1)$-bit storage.
\end{theorem}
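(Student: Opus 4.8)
\textbf{Proof proposal for Theorem~\ref{thm:mainDFS}.}

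The plan is to derive Theorem~\ref{thm:mainDFS} by assembling the algorithm \textsf{DLDFS} described above out of three ingredients — the subroutine \Findfirst, the backtrack procedure justified by Lemma~\ref{lma:stackcorrectness}, and the R-path structure with its three operations \Movetop, \Modifymove, \Moveonehopdown — and then verifying the three claims hidden in the statement: (i) \textsf{DLDFS} actually realizes the Lex-DFS traversal, (ii) it uses $O(1)$-bit memory and $O(1)$-bit storage, and (iii) it terminates within a polynomial number of agent steps, each consuming polynomially many TM steps. Throughout, I take the realizability of the R-path structure as given (its $O(1)$-bit implementation is the separate technical core developed in Section~\ref{sec:main}); the role of the present proof is only to show that the interface offered by R-path is exactly what \textsf{DLDFS} needs.

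For correctness I would argue by maintaining a single global invariant that holds at the boundary of every macro-phase of \textsf{DLDFS}: the set of grey nodes is precisely the vertex set of the current grey path $P(\Vcur)$, the node colours agree with the standard Lex-DFS state, and the R-path instance has source $r$ and target equal to the tail of the grey path. The forwarding phase preserves this invariant trivially, since it mimics the naive Lex-DFS move and the new grey child is appended to the R-path by one \Modifymove. The backtrack phase is the delicate one: starting at $u$ coloured green, the agent uses \Movetop to reach $r$ and then walks $P(u)$ using the $\mathsf{traversal}$ flags; by induction on the position along $P(u)=u_0,\dots,u_k$ ($u=u_k$), Lemma~\ref{lma:stackcorrectness} guarantees that the non-flagged grey neighbour of $u_j$ of minimum port number is exactly $u_{j+1}$, so \Findfirst with the \textsf{HeadAscend} order identifies it; when the probed neighbour is green the agent is at $u_{k-1}$, the parent. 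A second, role-inverted traversal resets all flags, the R-path instance is updated so that its target becomes $u_{k-1}$ (by an appropriate use of \Movetop/\Modifymove), $u$ is recoloured black, and the invariant is restored with $\Vcur=u_{k-1}$. Termination occurs when a backtrack is attempted at $r$ with no white neighbour, at which point every node is black, i.e.\ the whole graph has been traversed in Lex-DFS pre-order.

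For the resource bounds, the agent state is a constant-size controller recording which macro-phase is active, the internal state of the current \Findfirst call (which needs no persistent counter, as already observed), and the $O(1)$-bit working state of the current R-path operation; hence $O(1)$-bit memory. The storage at each node holds only $\mathsf{color}$ (four values), the $\mathsf{traversal}$ flag, and the $O(1)$ bits used by the R-path instance; hence $O(1)$-bit storage. For the time bound, there are at most $n$ backtracks; each performs one \Movetop ($O(n)$ hops along the R-path), then $O(n)$ iterations of \Findfirst (each probing $O(\Delta)=O(n)$ neighbours with a constant number of moves per probe), then an equally long flag-reset pass, and $O(1)$ R-path updates, so $\mathrm{Poly}(n)$ agent steps per backtrack; the forwarding moves contribute $\mathrm{Poly}(n)$ more, and the total is $\mathrm{Poly}(n)$ agent steps provided each R-path operation runs in $\mathrm{Poly}(n)$ steps, which holds by its construction. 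Finally, each agent step runs a TM that only scans its $O(1)$-bit memory tape and the read-only tapes holding $b(\Vcur)$, $\Delta(\Vcur)$ and $\Portin$ (the latter two in unary, hence length $O(n)$), so each agent step is $\mathrm{Poly}(n)$ TM steps, making \textsf{DLDFS} polynomial-time.

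The step I expect to be the real obstacle lies outside this assembly, in the R-path construction itself: maintaining a traversable $s$–$t$ path with only $O(1)$ bits per node means no node can store the port to its path-successor, so — exactly as with the grey-path lemma — the successor must be recoverable from a constant-size local invariant even when the path has chords, and this invariant must survive both the extension and the retraction of the target performed by \Modifymove. Within the present proof, the corresponding care-point is merely to check that the agent's position and the R-path's target stay synchronised with the grey-path tail across every transition, which is precisely the bookkeeping that the global invariant above is designed to track.
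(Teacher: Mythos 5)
Your proposal is correct and follows essentially the same route as the paper: Theorem~\ref{thm:mainDFS} is obtained by assembling \textsf{DLDFS} from \Findfirst, the grey-path backtracking justified by Lemma~\ref{lma:stackcorrectness}, and the R-path interface ({\Movetop}, {\Modifymove}, {\Moveonehopdown}), with the R-path realization deferred to its separate $O(1)$-bit construction exactly as the paper does. Your added invariant bookkeeping and explicit polynomial-time accounting only make explicit what the paper leaves implicit, and you correctly identify the R-path implementation (handled in Section~\ref{sec:main} and the appendix) as the remaining technical core rather than part of this assembly argument.
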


%The complete description of \textsf{DLDFS} by pseudocodes is given in the appendix.
In the following applications, we often run \textsf{DLDFS} multiple times. Then 
all nodes are colored by black after the first invocation of the algorithm. To run \textsf{DLDFS} again, those colors must be reset to white. This matter is easily 
resolved in the same way as resetting all flags in the grey path traversal. That is, 
it suffices to run \textsf{DLDFS} again by swapping the meanings of black and white nodes.

%The running time of {\Modifymove} and {\Movetop} depends on the number of 
%edges incident to in-path nodes. In the worst-case, the maintained $s$-$t$ path can contain $\Theta(n)$ nodes. The computation %cost at each in-path node is incurred by {\Findfirst}. Since the number of invocations of {\Findfirst} at each node is bounded by %$O(1)$, the total number of movements are bounded by $O(m)$ (i.e., the sum of
%the degrees of in-path nodes). In the main routine of our Lex-DFS algorithm, {\Modifymove} and {\Movetop} will be invoked %$\Theta(n)$ times respectively. Thus the total cost brought by the R-path subroutine is $O(mn)$ steps. Except for that, the time complexity of the main routine is the same as the naive Lex-DFS algorithm, that is, 
%$O(m)$ steps. Consequently we have the following theorem.

\subsection{Implementation of R-path Structure}
We first explain the structure of memory and storage:
The agent is oblivious in procedures {\Movetop} and {\Moveonehopdown}. 
The implementation of {\Modifymove} uses two agent states, called \emph{Find} and 
\emph{Delete}. The storage consists 
of four variables \target, \inPath, \direction, and \Color. Note that the variable {\Color} in this section is independent of the variable with the same name in the main Lex-DFS algorithm.
The variable {\target} is a flag indicating the current target node. 
The variables {\inPath} and {\direction} are also the binary flags for recording the 
information on the maintained path (the details are explained later). The 
variable {\Color} is a mark internally used in the procedures,
which takes one of four colors $\{\mrm{white}, \mrm{red}, \mrm{blue}, \mrm{yellow}\}$.
Initially, all nodes have color $\mrm{white}$.
We call the node satisfying {\inPath} $=\True$ an \emph{in-path node}, and denote by $P$ 
the set of all in-path nodes. The main idea for saving space is that R-path maintains the
set of in-path nodes constituting a ``minimal'' path from the source node to the target node, 
where the minimality means that any subgraph induced by $P$ forms a path graph.
This feature guarantees that any node in $P$ has at most two neighbors in $P$. 
We call the neighbors of $v$ in $P$ closer to $s$ and $t$ the \emph{backward} and \emph{forward} neighbors of $v$
respectively, which are denoted by $\Pred(v)$ and $\Succ(v)$. To traverse the maintained path correctly, 
we further add one-bit information by variable {\direction}. Precisely, 
the value {\UlargerD} (resp. {\UsmallerD}) of variable {\direction} implies 
that the port number of the backward neighbor is greater 
(resp. smaller) than that of the forward one. Supported by this information, 
the agent can perform the backward and forward movement in the path correctly utilizing the procedure {\Findfirst}.  For example, if $\b(\Vcur).\direction = \UsmallerD$ and the agent wants to find $\Pred(\Vcur)$, the agent runs {\Findfirst} with predicate $b(\cdot).\inPath = \True$ and \textsf{HeadAscend} order. Then {\Findfirst} returns the port number of $\Pred(\Vcur)$. 
All other cases can be processed similarly (see Figure~\ref{fig:predsucc}).
It is straightforward to implement the procedures {\Movetop} and {\Moveonehopdown} utilizing this mechanism. 

\begin{figure}[t]
\begin{center}
\includegraphics[width=150mm]{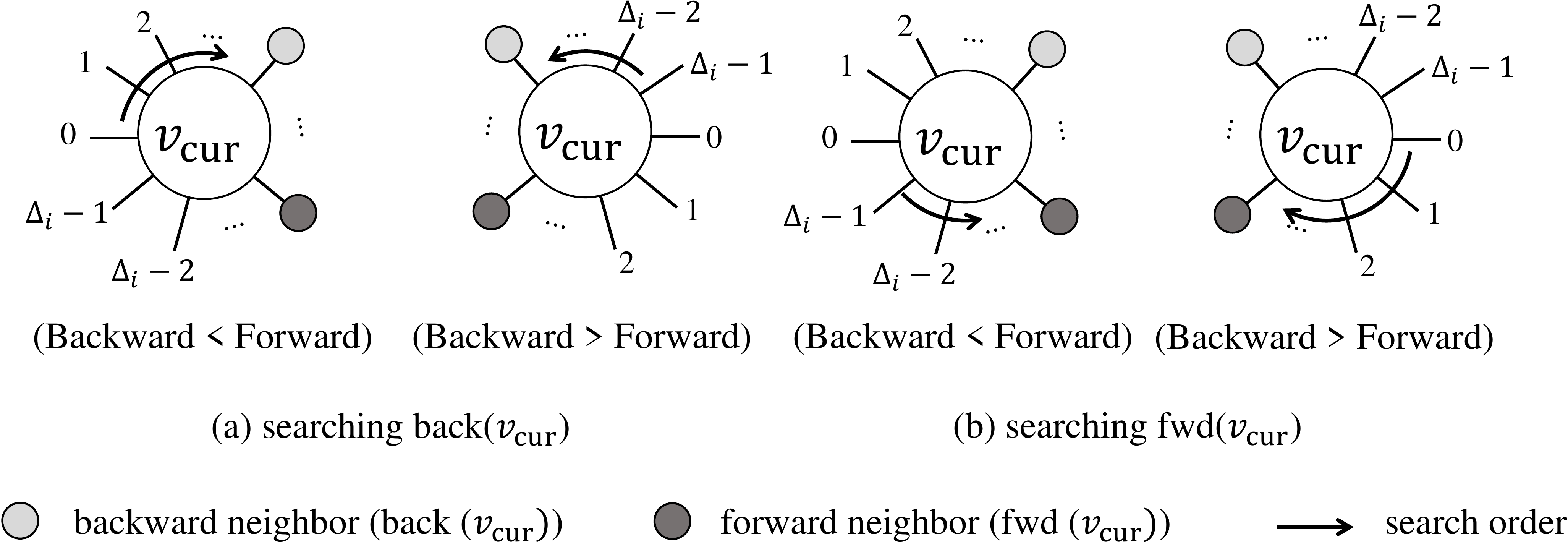}
\caption{Finding $\Pred(v)$ or $\Succ(v)$}
\label{fig:predsucc}
\end{center}
\end{figure}

The main technical challenge of realizing R-path is how to implement
\Modifymove. Consider the situation where the agent is on the target node $t$ and wants to update 
the target node with a neighbor $t' \in N(t)$. First, the agent colors 
$t'$ with \mrm{yellow}. If $t'$ is on the maintained $s$-$t$ path, the edge $(t',t)$ is the tail edge of that path (Figure~\ref{fig:modifymove}(a)). Then the update completes by simply 
removing it, i.e., setting $b(t).\inPath = \False$, $b(t).\target = \False$, $b(t').\Color = \mathrm{white}$, and 
$b(t').\target = \True$. The case that $t'$ is not on the maintained $s$-$t$ path is more complicated because 
the resultant $s$-$t'$ path might not be minimal. Then, the agent moves 
to the source node $s$ by invoking {\Movetop}, and starts the \emph{find phase} by changing its state to Find. 
In the find phase, the agent descends the current $s$-$t$ path by using 
{\Moveonehopdown} (Figure~\ref{fig:modifymove}(b1)). During the phase 
it also checks if the current in-path node has $t'$ as its neighbor, which is
done by searching the yellow neighbor with {\Findfirst}. Since the agent keeps moving forward until $t'$ is found
in the current neighbors and $t'$ is found at latest, the find phase always terminates with successfully finding 
the yellow neighbor. 
If the yellow neighbor is found at a node $u$ (referred to as the \emph{branching node}) for the first time, 
the concatenation of the path from $s$ to $u$ and the edge $(u, t')$ creates 
a new path connecting $s$ and $t'$, which satisfies the minimality requirement 
of R-path. After finding the branching node $u$, the agent must continue to descend 
the $s$-$t$ path to $t$ for deleting the expired path from $u$ to $t$. To obtain the forward way to $t$ at node $u$, 
we do not immediately modify the variable {\direction} of $u$. Instead, when $u$ is found, 
the agent colors node $u$ by \mrm{red} or \mrm{blue}. The color \mrm{red} (resp. \mrm{blue}) implies that the variable {\direction} at node $u$ must be updated by the value 
{\UlargerD} (resp. {\UsmallerD}). 
The color assigned to $u$ can be computed by comparing the port
numbers of the yellow neighbor and $\Pred(u)$ using {\Findfirst}, but it is not an obvious task because one cannot distinguish $\Pred(u)$ and $\Succ(u)$ during the run of {\Findfirst}. We resolve this matter by the following strategy: The agent first 
finds the port number of the yellow neighbor by {\Findfirst} with option HeadAscend, and then starts {\Findfirst} again
with Middle- option from the position of the yellow neighbor to find any in-path neighbor. The search order of the second invocation 
follows the ascending order if $b(u).\direction = \UlargerD$ holds, or the descending order otherwise.
The agent decides the color of $u$ by blue if one of the following cases applies: 
(1) In the ascending order search, no in-path neighbor is found, or (2) in the descending 
order search an in-path neighbor is found. The node $u$ is colored by red in all other cases. 
We can see that the procedure correctly decides the color of $u$ from Figure~\ref{fig:branching} that 
describes all the possible cases.
The actual updates of $b(u).\direction$ is deferred to the end of the following
\emph{delete phase}. In the delete phase, the agent continues to 
descend the path with resetting {\inPath} by \False (Figure~\ref{fig:modifymove}(b2)). Arriving 
at $t$, the agent finds the yellow neighbor $t'$ and moves to there. The agent sets
$b(t').\Color = \mathrm{white}$ and $b(t').\inPath = \True$, and then further moves to the neighbor with colored 
by red or blue (i.e., $u$). The variable $b(u).\direction$ 
is appropriately updated according to its color. Finally, the agent goes back to $t'$ and the procedure finishes (Figure~\ref{fig:modifymove}(b3)). The pseudocode of the procedure 
{\Modifymove} and the correctness of the R-path structure is given in the appendix. 

\begin{figure}[t]
\begin{center}
\includegraphics[width=150mm]{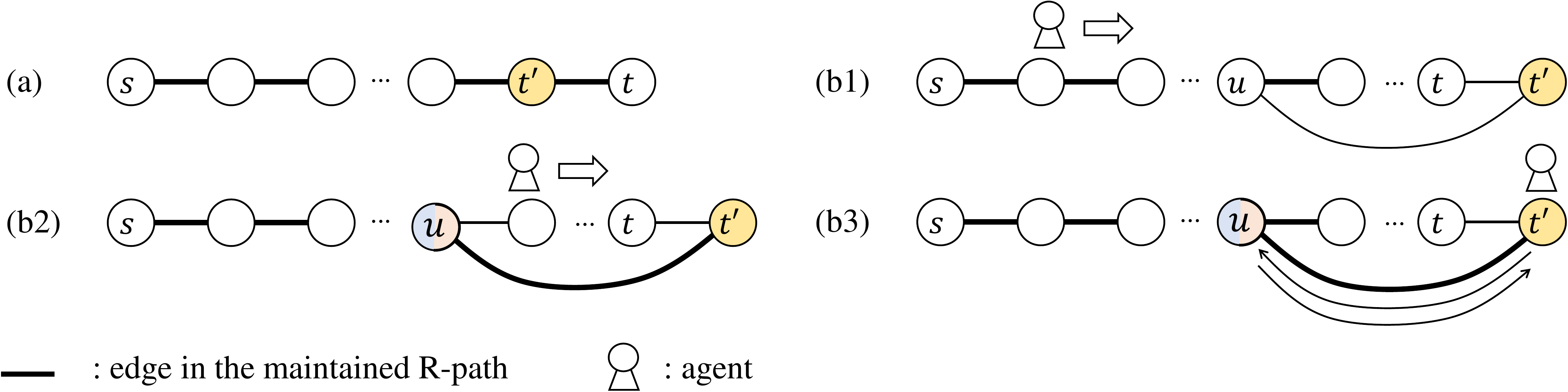}
\caption{The implementation of {\Modifymove}}
\label{fig:modifymove}
\end{center}
\end{figure}

\begin{figure}[t]
\begin{center}
\includegraphics[width=150mm]{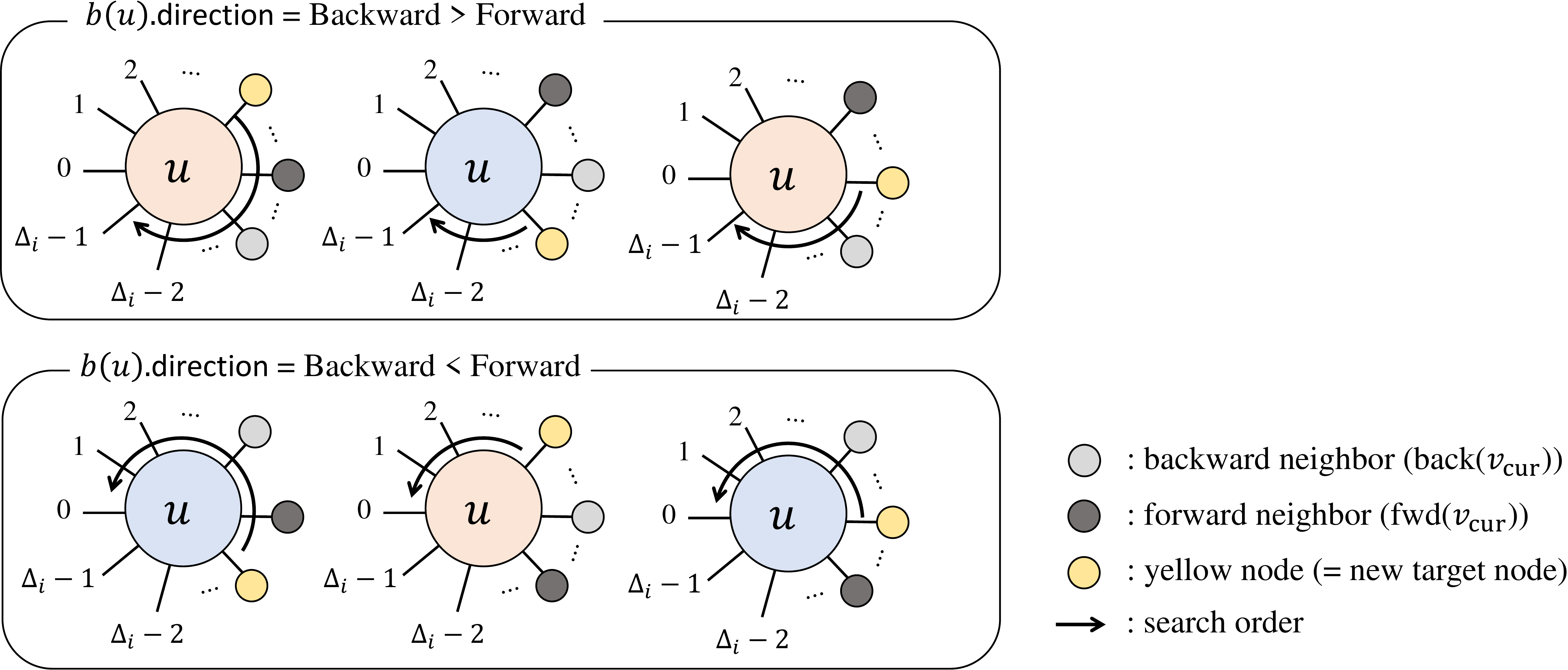}
\caption{Coloring of branching nodes. Recall that the goal of this task is to color $u$ by red if the port number of the yellow neighbor precedes that of the backward in-path neighbor, or by blue otherwise.}
\label{fig:branching}
\end{center}
\end{figure}

\section{Simulation of $O(n)$-bit Agent by One-bit Agent} \label{sec:simulation}

%\subsubsection{Successor/Predecessor Search} \label{subsubsec:succ-pred}
\subsection{Extending R-path structure}
\label{sec:ex_Rpath}
We first introduce four additional operations for the R-path structure. 
\begin{itemize}
\item \Movetarget: The agent goes down to the target node maintained by the instance.
\item \Del: Change the current target node to the current source node.
\item \Copy($X_1, X_2$): Given two instances $X_1$ and $X_2$ of the R-path, the path maintained by $X_1$ is copied 
to the instance $X_2$. That is, by this operation, $X_1$ and $X_2$ maintains the same path. The operation terminates
at the target node of $X_1$ (which is equal to that of $X_2$).
\item \Modifysucc / \Modifypred: Change the current target node of the instance to its
successor/predecessor in the lex-DFS ordering. 
\end{itemize}
The procedure {\Movetarget} is easily implemented by iterative invocations of \Moveonehopdown.
The procedure {\Del} is also easily implemented by the following operations: The agent executes {\Movetarget} first, and then 
iteratively invokes {\Modifymove} to change the target to its backward neighbor until the target becomes $r$.
The backward neighbor is easily found by by running {\Findfirst}. 

The procedure $\Copy(X_1,X_2)$ first invokes 
{\Del} for $X_2$, and then iteratively invokes
{\Modifymove} for $X_2$ to move the target of $X_2$ along the path maintained by $X_1$. That is, assuming the target node $t$ of $X_2$ is on the path maintained by $X_1$, the procedure finds the forward neighbor $v$ of $t$ in the path managed by 
$X_1$ using {\Findfirst}, and changes the target node of $X_2$ to $v$ by {\Modifymove}. 
When the target node of $X_2$ reaches the target node of $X_1$, the copy from $X_1$ to $X_2$ completes.

The implementation of {\Modifysucc} and {\Modifypred} requires 
a slightly non-trivial idea. It is realized by some modified version of \textsf{DLDFS} which manages 
an extra R-path instance $X'$. Let $I_j$ be the time-step interval in the execution of 
\textsf{DLDFS} from the first visit of $v_j$ to that of $v_{j+1}$. During the period $I_j$, 
$X'$ manages the path connecting $r$ and $v_j$. Consider 
the invocation of {\Modifypred} for $X''$ with the current target node $v$. It runs {\Movetop} for $X''$ to go back to the root $r$, and then starts \textsf{DLDFS}. At the beginning of each $I_{j + 1}$, the agent checks if $v_{j+1} = v$ holds or not. 
If it holds, the target node of $X'$ is $v_j$. Hence it suffices to invoke 
$\Copy(X', X'')$ to change the target node of $X''$ to the predecessor of $v = v_{j+1}$. Then the agent updates $X'$, 
which is done by copying the R-path instance $X$ managed inside of \textsf{DLDFS}, which has the target node 
$\Vcur = v_{j+1}$, to $X'$. Finishing the run of \textsf{DLDFS}, the procedure terminates with the invocation of {\Del} of 
$X'$. The procedure {\Modifysucc} is implemented similarly. 

\subsection{From $O(n)$ bits to $O(1)$ bits}
\label{sec:sim-constant-bit}

For simplicity, we construct the simulation algorithm $\Sim(A)$ for any algorithm $A$ using 
$n$-bit memory and $O(1)$-bit storage, but it is easy to extend $\Sim(A)$ for supporting general $O(n)$-bit memory.
As stated in the introduction, the key idea of the simulation is the distributed management of the tapes of $M(A)$ 
using the serialization of all nodes by \textsf{DLDFS}. To refer to the status of the simulated algorithm $A$, we use the notations with upper script $A$, e.g., $\Vcur^A$ denotes the 
current location of the simulated agent running $A$. The storage $b^A(v)$ of node $v$ in $A$ 
is also stored at $v$ in $\mathit{Sim}(A)$. The $i$-th node $v_i$ in the lex-DFS ordering stores the $i$-th bit of each tape.
%which is referred to as one-bit variable $z_{i,j}$. That is, $z_{0,j} z_{1,j}, \dots, z_{n-1, j}$ works as the $j$-th tape of length $n$. 
Hence we treat the vertices $v_0, v_1, \dots, v_{n-1}$ as the address of each tape, i.e., we say ``the $j$-th tape head is at $v_i$'' to imply the $j$-th tape head is at
the $i$-th entry of the tape. The location of the $j$-th tape head of $M$ is denoted by
$t_j$ ($1 \leq j \leq 5$). To keep the information of those locations, the algorithm $\mathit{Sim}(A)$ manages six 
independent instances of the R-path, each of which is denoted by $X(z)$ for $z \in \{\Vcur^A, t_1, t_2, t_3, t_4, t_5 \}$
and manages a path from $v_0$ to $z$. In addition, each node has one-bit flag 
$\mathit{past}$ in its storage, which becomes {\True} if and only if the node is the 
location of the simulated agent one-step before. The state of $M(A)$ is stored in the memory of the simulator agent running $\mathit{Sim}(A)$. Obviously, the structure explained above
incurs only $O(1)$-bit memory and $O(1)$-bit storage per node. 
The one-step execution of $A$ performed at $\Vcur^A$ is simulated by the following phases. 
\begin{itemize}
    \item Initialization phase : As defined in Section~\ref{sec:prelim}, the 2-4th tapes must be initialized by the information of $b^A(\Vcur^A)$, the degree of $\Vcur^A$, and $\Portin^A$. 
    Copying the information $b^A(\Vcur^A)$ into the second tape is straightforward:
    The simulator agent stores whole information of $b^A(\Vcur^A)$ in its memory and takes it to
    the second tape by invoking {\Movetop} for $X(\Vcur^A)$ and {\Movetarget} for $X(t_2)$. The initialization of the third and fourth tapes are slightly different because the 
    information to be copied is not explicitly stored in $\Vcur^A$. Recall that the information in these two tapes is unary encoded. To read the information, 
    we use a similar technique as {\Findfirst}, where the agent probes all neighbors of $\Vcur^A$ in the ascending 
    order of their port numbers with tracking $\Vcur$ by an extra R-path instance $X(\Vcur)$. Consider an example of copying the information of the degree of $\Vcur^A$ to the third tape. The agent appends 1 at the next position of $t_3$ when it visits 
    one neighbor of $\Vcur^A$, by traversing the paths managed by $X(\Vcur)$ and $X(t_2)$. After probing all neighbors, 
    the third tape obviously stores the unary encoding of the degree information. The initialization of $t_4$ is also handled similarly (the details are deferred to the appendix). 
    \item Local simulation phase : The simulator agent iteratively simulates one TM step of 
    $M$ until $M$ terminates. The agent first collects 
    the values of the entries pointed by the tape heads. It is easily processed by the procedures {\Movetop} and {\Movetarget}. The simulation of one TM step is the local computation of the agent. After the computation, the agent must update the tape contents, and move the tape heads. It is also easily implemented by {\Modifysucc} and {\Modifypred}.
    \item Agent movement phase: When $M$ terminates, the first tape stores the state of the simulated agent after the local computation, and the fifth tape stores the port number to which the simulated agent goes out from $\Vcur^A$. Updating $\Vcur^A$ is seen as the ``inverse'' operation of initializing the fourth tape. Hence one can implement it in a 
    similar way as the initialization phase (the details are deferred to the appendix). 
 \end{itemize}

\subsection{From $O(1)$ bits to One bit}
\label{sec:sim-one-bit}
Let $A$ be any algorithm for the system with $O(1)$-bit memory and $O(1)$-bit storage. 
We simulate this algorithm by the one-bit memory agent. Let us denote by $\Sim(A)$ the
simulator algorithm, and $\ARmem$ be the memory of the simulator agent. 
In $\Sim(A)$, the storage of each node $v$ consists of two variables $\Nmem$ and $\Amem$, 
and an auxiliary integer variable $\Trans$. Intuitively, $\Nmem$ works as the storage of the simulated algorithm $A$, 
and $\Amem$ is used as the memory of the simulated algorithm. The contents of $\Nmem$ and $\Amem$ are represented by  
fixed-length binary sequences. Let $c$ be the length of $\Amem$.
Our simulation algorithm satisfies the following invariants at the beginning of the one-step simulation: (1)
For any $v \in V$, $b(v).\Trans = 0$ and $b(v).\Nmem =b^{A}(v)$, (2) $b(\Vcur^A).\Amem = \ARmem^A$, and 
(3) $\Portin = \Portin^A$ and $\Vcur = \Vcur^A$.

We refer to the node $\Vcur^A$ as the \emph{departure node} and refer to the neighbor of $\Vcur^A$ the simulated agent 
visits next as the \emph{destination node}. Since the simulator agent has only one-bit memory, it cannot take 
any additional control information. The $j$-th entry of $\Amem$ is referred to as $\Amem[j]$ ($0 \leq j \leq c -1 $).
Hence the agent needs to know the current context only from the information of the 
storage at the departure node and the destination node. The detailed description of the simulator 
agent at a node $u$ is given as follows:
\begin{itemize}
\item If $b(u).\Trans > 0$ when the agent visits $u$ but $b(u).\Trans$ is reset to zero in the local computation 
(where the local computation up to the reset belongs to the past one-step simulation), the simulator agent starts 
the simulation of $A$ regarding $u$ as the departure node. Using the space $b(u).\Amem$ as 
the first tape of $M(A)$ and $b(u).\Nmem$ as the second tape of $M(A)$,
it simulates the run of $M(A)$. The information of $\Portin^A$ is available due to the third invariant above.
When $M(A)$ terminates, the fifth tape correctly stores the destination node. Then the simulator 
agent fetches $b(u).\Amem[b(u).\Trans]$ to $\mathsf{a}$, decrements $b(u).\Trans$, 
and moves to the computed destination node.
\item If $0 \leq b(u).\Trans(u) < c - 1$ holds when the simulator agent visits $u$, the current situation is recognized 
as in transferring the information of $\Amem$ and at the destination node. Hence the simulator agent sets 
$b(u).\Amem[b(u).\Trans] = \ARmem$, increments $b(u).\Trans$, and goes back to $\Portin$.
\item If $- (c - 1) < b(u).\Trans(u) < 0$ holds when the simulator agent visits $u$, the current situation is recognized as 
in transferring the information of $\Amem$ and at the departure node. Hence the simulator agent fetches $b(u).\Amem[- (b(u).\Trans)]$ to $\mathsf{a}$, decrements $b(u).\Trans$, and goes back to $\Portin$.
\item If $b(u).\Trans(u) = - (c - 1)$ holds when the simulator agent visits $u$, the simulator agent finds that transferring $\Amem$ finishes and the current location is the departure node. Hence 
it resets $b(u).\Trans$ to zero and moves to $\Portin$.
\item If $\Trans(u) = c - 1$ holds when the simulator agent visits $u$, the simulator agent finds
that transferring $\Amem$ finishes and the current location is the destination node.
Hence by resetting $b(u).\Trans$ to zero, the current one-step simulation finishes and then proceeds to the next one-step
simulation. 
\end{itemize}
It is easy to check that all the invariants are satisfied at the end of the one-step simulation.
Also, the correctness of the simulation is easily deduced from the invariants. Combining the simulation algorithms of Section~\ref{sec:sim-constant-bit} and \ref{sec:sim-one-bit}, we obtain Theorem~\ref{thm:main}.

\section*{Acknowledgement}
The authors would like to thank Shantanu Das for his suggestion on the simulation by one-bit agent. % 参考文献
%\bibliographystyle{alpha}
%\bibliography{reference}

\newcommand{\etalchar}[1]{$^{#1}$}

\section{Correctness of the R-Path Structure}

\subsection{Pseudocode of {\Modifymove}}

To present the algorithm of {\Modifymove}, we introduce an auxiliary procedure 
called \Mark$(I)$. It updates the whiteboard $b(\pi^{-1}_{\Vcur}(\Portin))$ following the instruction $I$. The 
actual behavior inside the procedure is that the agent first goes to $\pi^{-1}_{\Vcur}(\Portin)$, updates 
the contents of $b(\pi^{-1}_{\Vcur}(\Portin))$ following $I$, and goes back to {\Vcur}. Note that the value of 
$\Portin$ and $v_\mt{cur}$ does not change before and after the run of the procedure. The pseudocode of the procedure
{\Modifymove} is presented in Algorithm \ref{alg5}.

\begin{algorithm*}[t] 
\caption{\Modifymove($x$)}   
\label{alg5}
%{\setlength{\baselineskip}{10pt}
\begin{algorithmic}[1]
%\State {$Q=\{$Init，Find，Delete$\}$}
%\State {inPath：$\{$True，False$\}$}
%\State {target：$\{$True，False$\}$}
%\State {color：$\{$white，red，blue，yellow$\}$}
%\State {direction：$\{\UlargerD，\UsmallerD\}$}
%\State {/$\ast$ Beginning of Algorithm $\ast$/}
\State {\Mark(\Color $\ot$ yellow)} \Comment{Color the new target node.}
\State {$\Portin \ot $ \Findfirst(HeadAscend, {\inPath} $=$ \True $ \land$ {\Color} $=$ \mrm{yellow})}
\If {$\Portin \neq -1$} \Comment{ Find the yellow node on the path.}
  \State {$b(\Vcur)$.(\msf{target}, \msf{inPath}) $\ot$ (\False, \False) }
  \State {\Mark({\target} $\ot$ \True; {\Color} $\ot$ \mrm{white})}
  \State {Move to {\Portin} and Halt}
\Else
  \State {$q \ot $ Find; \Movetop()}
\EndIf
\While{True}
  \If{$q = $ Find} 
    \State {$\Portin \ot $\Findfirst(HeadAscend, {\Color} $=$ \mrm{yellow})}
    \If {$\Portin \neq -1$} \Comment{Finds the yellow(target) node from the neighbors.}
      \State {$q \ot $ Delete}
      \State {$b(\Vcur)$.{\Color} $\ot$ \mrm{red}} \Comment {$\Vcur$ is the predecessor of the new target on R-path.}
      \If {$b(\Vcur)$.{\direction} $=$ \UlargerD} 
        \State {$\Portin \ot $\Findfirst(MiddleAscend, {\inPath} $=$ \True)}
        \If {$\Portin = -1$} \Comment{backward $<$ new target}
          \State {$b(\Vcur)$.{\Color} $\ot$ \mrm{blue}} 
        \EndIf
      \Else
          \State {$\Portin \ot $\Findfirst(MiddleDescend, {\inPath} $=$ \True)}
          \If {$\Portin \neq -1$} \Comment{backward $<$ (forward $<$) new target}
            \State {$b(\Vcur)$.{\Color} $\ot$ \mrm{blue}} 
        \EndIf
      \EndIf
    \EndIf      
    \State {\Moveonehopdown()}
  \EndIf
  \If{$q=$ Delete} 
    \If {$b(\Vcur)$.{\target} $=$ \True}
      \State {$b(\Vcur)$.(\msf{target}, \msf{inPath}) $\ot$ (\False, \False)} \Comment{Removing the old target}
      \State {$\Portin \ot $ \Findfirst(HeadAscend, {\Color} $=$ yellow)}
      \State {Move to \Portin} \Comment {Move to the new target node.}
      \State {$b(\Vcur)$.(\msf{color}, \msf{inPath}, \msf{target}) $\ot$ (\mrm{white}, \True, \True)} \Comment{Set up the new target}
      \State {\Portin $\ot$ \Findfirst(HeadAscend, {\Color} $=$ \mrm{red})}
      \If {$\Portin = -1$} \Comment{Branching node is blue}
        \State {$\Portin \ot$ \Findfirst(HeadAscend, {\Color} $=$ \mrm{blue})}
        \State {\Mark({\direction} $\ot \UsmallerD$)}
      \Else  \Comment {Branching node is red}
        \State {\Mark({\direction} $\ot \UlargerD$)} 
      \EndIf
        \State {\Mark ({{\Color} $\ot$ \mrm{white}})} \Comment {Clean up the predecessor in the path}
        \State {\Mark ({\inPath $\ot$ \True)}}
      \State {Halt}
    \Else
      \State {\Moveonehopdown()} \Comment {Traversal of the expired path}
      \State {\Mark({\inPath} $\ot$ \False)}
    \EndIf
  \EndIf
\EndWhile
\end{algorithmic}
%}
\end{algorithm*}

\subsection{Correctness Proof}

It is obvious that the success of the operations {\Movetop} and {\Moveonehopdown} relies 
on whether the procedure {\Modifymove} correctly manages the contents of variables 
{\inPath} and {\direction}. 
We first present the correctness criteria of the R-path.

\begin{definition} \label{dfn:rpath}
We say that an instance $X$ of the R-path is \emph{consistent} with respect to 
the source node $s$ and the target node $t$ if the following conditions
are satisfied.
\begin{enumerate}
\item $b(t).{\target} = \True$..
\item Each connected component of the subgraph $G'$ of $G$ induced by the nodes satisfying $\inPath = \True$ 
is a path graph.
\item Let $P' = v_1, v_2, \cdots, v_h$ be any connected component $G'$ such that
$v_1 = x$ or $v_h =  y$ holds. For any $v_{i+1} (1 \leq i \leq h - 1)$, if $\pi_{v_{i+1}}(v_i)$ is grater than $\pi_{v_{i+1}}(v_{i+2})$, then $b(v_{i+1}) = \UlargerD$ holds,
and $b(v_{i+1}) = \UsmallerD$ holds otherwise.
\end{enumerate}
In addition, if the number of connected components in $G'$ is one, we say that 
the R-path data structure is \emph{strongly consistent}.
\end{definition}
When an instance $X$ is consistent, we call any connected subgraph managed by $X$ 
(not necessarily a connected component) a \emph{fragment}. 
By definition, each fragment is a path graph, and has backward and forward directions determined 
by variable \textsf{direction}. A fragment with top node $x$ and bottom node $y$ is 
called an $x$-$y$ fragment. From the definition, if $X$ is strongly consistent, 
its (unique) connected component is the $s$-$t$ fragment. For a node $z$ in an 
$x$-$y$ fragment, the backward and lower neighbors of $z$ in the fragment are denoted 
by $\Pred(z)$ or $\Succ(z)$.
The correctness of R-path is captured by the following two lemmas:

\begin{lemma}
Assume that the R-path is consistent and $\Vcur$ is contained in a $x$-$y$ fragment.
Then after running {\Moveonehopdown} or {\Movetop}, the agent stays at $\Succ(v)$
or $x$ respectively.
\end{lemma}

\begin{proof}
As we discussed above, as long as the R-path is consistent, the agent correctly
finds backward or forward in-path neighbors. Thus the lemme obviously holds. 
\end{proof}

\begin{lemma}
At the invocation of {\Moveonehopdown} and {\Movetop} in any execution of {\Modifymove}, 
the R-path is consistent (and thus those operations work correctly). In addition,
at the end of executing {\Modifymove}, the R-path is strongly consistent.
\end{lemma}
\begin{proof}
In Find phases, the agent does not modify {\inPath} or {\direction} 
at any node. Thus the lemma obviously holds. In Delete phases, letting $u$ be the 
branching node, the $s$-$u$ fragment is preserved during the phase, and the agent
gradually deletes the information on the prefix of the $u$-$t$ fragment. More precisely,
if the agent stays at node $u'$ in the Delete phase, $u'$-$t$ fragment
is preserved (and thus consistent). That is, {\Moveonehopdown} performed 
at $u'$ correctly works. It implies that  $u$-$t$ fragment is correctly deleted,
and the agent eventually reaches $t'$. At $t'$, $b(t').\inPath$ becomes true, and
thus the remaining issue is to show that the subset $P'$ consisting all the nodes 
in the $s$-$u$ fragment and $t'$ satisfies the strong consistency. It is obvious that
$P'$ induces one connected subgraph. In addition, except for $u$, no node 
in $P'$ has $t'$ as its neighbor. That is, $P'$ induces a path graph. On the information stored in variable {\direction}, 
no node except for $u$ takes any update, and thus the correctness is preserved. 
The variable $b(u).\direction$ is consistently updated in
the run of {\Modifymove}, which is verified by checking all the cases of Figure~\ref{fig:branching}. The lemma is proved. 
\end{proof}

\section{Omitted Details of Simulation Algorithms from $O(n)$ Bits to $O(1)$ Bits}

\paragraph*{Initialization Phase}
Only the non-trivial point deferred in the main body of the manuscript is the details of 
initializing the third and fourth tapes. We focus on the implementation of initializing 
the third tape, i.e., writing $\Delta(\Vcur^A)$ symbols of 1 to the third tape. First, 
the simulator agent marks the node $\Vcur^A$, and then visits all the neighbors of $\Vcur^A$ 
in the ascending order of their port numbers as ${\Findfirst}(\mathrm{HeadAscend}, \cdot)$. 
During this process, the location of the simulator agent is always tracked by the R-path instance 
$X(\Vcur)$. When the simulator agent visits any neighbor $v$ of $\Vcur^A$, there exists 
a traversable path from $v$ to $t_3$ managed by $X(v)$ and $X(t_3)$ Using that path, the simulator 
agent moves to $t_3$ and appends symbol 1 to the tail of the third tape. Then it 
moves the position of $t_3$ by invoking {\Modifysucc} for $X(t_3)$, and returns to $v$. 
To go back to $\Vcur^A$, the simulator agent needs to identify the marked neighbor, which is 
done by {\Findfirst}. After repeating this process until the simulator agent visits all 
neighbors of $\Vcur^A$, the third tape contains $\Delta(\Vcur^A)$ symbols of 1. The 
initialization of the fourth tape is almost the same as that of the third tape. 
Only the difference is to iterate appending symbol 1 until the simulator agent finds the neighbor 
with $\mathit{past} = \True$, instead of visiting all neighbors.

\paragraph*{Agent Movement Phase}
Let $p$ be the number of the port to which the simulated agent moves, i.e., the port number 
written in the fifth tape in the local computation phase. Similarly to the initialization 
phase, the node $\Vcur^A$ is marked at the beginning of the phase. We refer to the target node 
of $X(\Vcur^A)$ as $v$. To change $v$ to $\pi_{\Vcur^A}(p)$, the simulator agent
first places it on $\pi_{\Vcur^A}(0)$. Assume that $v = \pi_{\Vcur^A}(j)$ for some $0 \leq j 
\leq \Delta(\Vcur^A) - 1$. Then the simulator agent moves to $t_5$ using the path managed by
$X(v)$ and $X(t_5)$, and checks if the symbol 1 is still contained in the fifth tape or not.
If the fifth tape does not contain 1, then $j = p$ holds and thus the movement phase finishes.
Otherwise, after deleting one symbol 1 in the fifth tape, the simulator agent goes back to 
$v = \pi_{\Vcur^A}(j)$. Then it changes the location of $v$ to $\pi_{\Vcur^A}(j + 1)$,
which is realized by invoking {\Modifymove} twice: The agent first finds the marked neighbor of 
$v$ (i.e., $\Vcur^A$), and moves $v$ to there by invoking {\Modifymove} for $X(v)$. 
Since the simulator agent moves from $\pi_{\Vcur^A}(j)$ to $\Vcur^A$ at the last step of 
{\Modifymove}, $\Portin$ stores the value $j$ immediately after the first invocation of {\Modifymove}. 
By going out from $\Vcur^A$ through the port $\Portin + 1$ and coming back to 
$\Vcur^A$, $\Portin$ is updated by $j+1$. The second invocation of {\Modifymove} for $X(v)$ 
takes $v$ to $\pi_{\Vcur^A}(j + 1)$. Finally, the phase finishes with unmarking $\Vcur^A$
and setting $b(\Vcur^A).\mathsf{past} = \True$.

\section{Proof of Theorem~\ref{thm:impossible}}
\label{sec:impossible}
In this section, we prove Theorem~\ref{thm:impossible} by showing that the following 
task cannot be solved only by the agent with zero-bit memory.
\begin{problem}[Task \Parity]
The task {\Parity} consists of all the graphs with a odd number of vertices.
\end{problem}
It is easy to show that $\Parity \in \Pagt(O(n),\Theta(1)) = \Pagt(1, \Theta(1))$ holds: The
agent counts up the modulo 2 of $n$ by running \textsf{DLDFS}. However, this task cannot be 
solved by the zero-bit agent.
\begin{lemma}
\label{lma:impossible}
$\Parity \notin \Pagt(0,\Theta(g(1)))$.
\end{lemma}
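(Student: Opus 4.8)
The plan is to show that a zero-bit (oblivious) agent cannot decide \Parity{} by arguing that such an agent is essentially a finite automaton whose behavior is completely determined by local observations, and then construct two input graphs---one with an even number of vertices, one with an odd number---on which the agent behaves identically and therefore produces the same output. First I would recall the basis result from Cohen et al.~\cite{CFIKP08}: a zero-bit memory agent operating with $o(\log\Delta)$-bit storage cannot even explore all graphs, and more to the point, on a carefully chosen family of graphs its trajectory is eventually periodic in a way that is insensitive to the global size of the graph. Under $\Theta(1)$-bit storage, the storage at each node can hold only a constant number of distinct values, so the pair (agent's behavior, storage configuration) lives in a finite space.

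The key step is the indistinguishability construction. I would take a ``ring-like'' or path-like gadget (the same style of construction used in the impossibility proof of Cohen et al.) on which the agent, starting from a fixed node, runs forever or halts after a number of steps bounded independently of the number of vertices $n$; then I would exhibit two graphs $G_{\mathrm{odd}}$ and $G_{\mathrm{even}}$, built by inserting an appropriate number of ``copies'' of a symmetric segment, such that the agent's entire execution---sequence of visited storage states, moves, and final output---coincides on both. Since the agent writes the same value to the output register in both cases, but exactly one of $G_{\mathrm{odd}}, G_{\mathrm{even}}$ is a yes-instance of \Parity, the algorithm is incorrect on at least one of them. This contradicts the assumed correctness of the polynomial-time algorithm, giving $\Parity \notin \Pagt(0, \Theta(1))$.

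The main obstacle I anticipate is handling the node storage. With zero-bit \emph{agent} memory but $\Theta(1)$-bit \emph{node} storage, the agent is not purely Markovian in its own state---the nodes can record visit information and thereby simulate a bounded global progress counter along the path it currently occupies. So the indistinguishability gadget must be robust: the two graphs must be locally isomorphic along the entire portion of the agent's walk (including which nodes have been marked and how), so that no finite-state node labeling scheme can tell them apart. The cleanest way to enforce this is to make the gadget highly symmetric---e.g., a cycle with uniform port numbering, or two cycles of lengths $N$ and $N+1$ chosen so that the agent's finite-memory behavior cannot break the symmetry before it would have to halt---and to invoke directly the lower-bound machinery of Cohen et al.~\cite{CFIKP08}, which already establishes that an oblivious finite automaton cannot traverse such families. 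I would therefore frame the proof as a reduction: any correct \Parity{} solver would in particular have to distinguish the two cycles, which the cited result forbids.

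Consequently the argument reduces to (i) quoting the Cohen et al.\ impossibility in the exact form needed, (ii) checking that $\Theta(1)$-bit storage does not escape it, and (iii) packaging the two-graph indistinguishability into the \Parity{} setting; step (ii) is the delicate one and is where I would spend the most care, possibly by a pumping-style argument on the finite configuration space traversed before the polynomial step bound is reached.
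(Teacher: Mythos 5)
Your high-level plan (invoke the Cohen et al.\ impossibility, then exhibit two inputs of different parity on which the executions coincide) is the right spirit, but the concrete route you propose has a genuine gap. The lower bound of Cohen et al.\ that the paper relies on says that a zero-bit-memory agent cannot explore when the storage is $o(\log \Delta)$ bits per node; its hard instances are high-degree graphs. It says nothing about cycles or other degree-two ``symmetric segments'': with $\Theta(1)$-bit storage a zero-bit agent \emph{can} explore a cycle (a rotor-router-style pointer of $\lceil \log \Delta\rceil = 1$ bit per node already suffices), so your reduction ``any \Parity{} solver would have to distinguish the two cycles of lengths $N$ and $N+1$, which the cited result forbids'' is not supported by the citation and is likely false as stated. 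The deeper problem is the one you yourself flag as step (ii): once the agent writes to node storage, the two ``locally isomorphic'' copies stop being isomorphic as \emph{configurations}, and a pumping/periodicity argument cannot be run in a finite state space, because the configuration includes the storage of all $n$ nodes and thus grows with $n$. So the symmetric-gadget construction does not go through.

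The missing idea is to avoid constructing the hard instance explicitly. The paper argues as follows: if an algorithm $A$ solved \Parity{} with zero-bit memory and $O(1)$-bit storage, then (since $A$ terminates) on every input it would either visit all nodes or not; if it always visited all nodes it would be an exploration algorithm, contradicting Cohen et al. Hence there exist a graph $G$, a start node, and a node $u$ that $A$ never visits. Now modify $G$ only at $u$, by attaching one pendant vertex $u'$ via the edge $(u,u')$: since the execution of $A$ never reaches $u$, it is literally identical on $G$ and on the modified graph $G'$ (no symmetry or indistinguishability machinery is needed), yet $|V(G)|$ and $|V(G')|$ have different parities, so $A$ errs on one of them. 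You should replace your explicit two-cycle construction with this existential argument; the pendant-vertex trick is exactly what makes step (iii) trivial and makes step (ii) disappear.
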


\begin{proof}
Suppose for contradiction that an algorithm $A$ solves {\Parity} using zero-bit memory and $O(1)$-bit storage.
Due to the impossibility result by Cohen et al.~\cite{CFIKP08} there exists no graph exploration algorithm
using zero-bit memory and $O(1)$-bit storage. Hence there exists the graph $G = (V, E, \Pi)$ and the initial location 
$v \in V$ such that $A$ decides if $G \in {\Parity}$ or not without vising all the nodes in $G$. Let $u \neq v$ 
be the node not visited by the agent in the execution of $A$, and constructs the graph $G' = (V', E', \Pi')$ 
obtained from $G$ by adding one node $u'$ and the edge $(u, u')$. Since the agent does not visit $u$, it cannot 
distinguish $G$ and $G'$ and thus the decision for those graphs are the same. However, the parity of $|V|$ and 
$|V'|$ are different. It is a contradiction.
\end{proof}

This lemma obviously deduces Theorem~\ref{thm:impossible}.

\end{document}